\pgfplotsset{compat=1.7}
\newtheorem{theorem}{Theorem}
\newtheorem{corollary}{Corollary}
\newtheorem{example}{Example}
\newtheorem{proposition}{Proposition}
\newtheorem{remark}{Remark}
\newcommand{\calf}{\mathcal{F}}
\begin{document}
\title{Haves and Have-Nots: A Theory of Economic Sufficientarianism }

\author{Christopher P. Chambers\thanks{Corresponding author.  Department of Economics, Georgetown University, ICC 580  37th and O Streets NW, Washington DC 20057. E-mail: \texttt{Christopher.Chambers@georgetown.edu}.} \and Siming Ye\thanks{Department of Economics, Georgetown University, ICC 580  37th and O Streets NW, Washington DC 20057. E-mail: \texttt{sy677@georgetown.edu}.}\thanks{This paper was originally circlated under the title ``On Sufficientarianism.''  We are grateful to the editor, Tilman Borgers, as well as an anonymous associate editor and three anonymous referees.  The current manuscript differs significantly from the original submission largely due to their comments.  They have done much to improve the manuscript. }}

\date{\today}

\maketitle
\begin{abstract}We introduce a generalization of the concept of sufficientarianism, intended to rank allocations involving multiple consumption goods.  In ranking allocations of goods for a fixed society of agents, sufficientarianism posits that allocations are compared according to the number of individuals whose consumption is deemed sufficient.  We base our analysis on a novel ethical concept, which we term sufficientarian judgment.  Sufficientarian judgment asserts that if in starting from an allocation in which all agents have identical consumption, a change in one agent's consumption hurts society, then there is no change in any other agent's consumption which could subsequently benefit society.  Sufficientarianism is shown to be equivalent to sufficientarian judgment, symmetry, and separability. We investigate our axioms in an abstract environment, and in specific economic environments.  Finally, we argue formally that sufficientarian judgment is closely related to the leximin principle.\end{abstract}

%We discuss three properties that uniquely pin down our family and, up to continuity requirements, endogenize the sufficientarian threshold.  Two, anonymity and reinforcement, are standard.  The third, sufficientarian monotonicity, generalizes classical monotonicity and requires that for any two lists of bundles in which individuals are ordered similarly by how much they receive, the worse of the two is indifferent to the new list of bundles considered by taking the componentwise minimum of the two bundles for each agent.\end{abstract}

\section{Introduction}
%Hook
%Can talk about a dev story for poverty trap，

%$\lhd$ $\unlhd$
Consider a society in which all individuals have access to enough clean water, food, and education.  Independently of whether any of them are wealthy or how the individuals obtained their positions, such an environment arguably dominates a society in which some of the individuals lack access to clean water and some lack access to food.  In this paper, we study a new form of economic \emph{sufficientarianism}, an interesting criterion of distributive justice recently introduced to the economics literature by \citet{Mariotti_2021} and \citet{Bossert2021, Bossert2022}. The goal of sufficientarianism is to ensure each individual has a sufficient level of well-being.  This work departs from the concept of well-being as the primitive, directly working instead with economic observables. 

%One could think of fixing a poverty level, and ensuring that as many individuals as possible reach that poverty line.  

The basic idea of sufficientarianism is that potential allocations of resources for a society are compared according to the number of individuals meeting an exogenously specified threshold of well-being.   The more agents who reach the level of well-being, the better.  %A sufficientarian ranking satisfies a weak version of the Pareto principle (if all agents weakly gain, then society weakly gains), but not the typical stronger version.  %Primitive to the sufficientarianism concept is a measure of well-being.  Different concepts of well-being lead to different concepts of sufficientarianism.  

Our work features several contributions.  First, we investigate a framework in which sufficientarianism is based on directly observable consumption, rather than well-being.  We envision observing consumption of a finite collection of individuals, we term this an allocation.  Our goal is to rank allocations.  We have in mind ranking allocations, for example, consisting of food or clean water.  Second, we provide a novel axiomatization of sufficientarianism in a totally abstract environment.  When describing the set of goods which are deemed ``sufficient,'' we allow total generality; the set can take any form whatsoever.  Importantly, this axiomatization endogenizes the sufficient set, rather than taking an exogenous sufficientarian threshold as a primitive in any of the axioms.  This axiomatization could also be understood as endogenizing the concept of welfare.  The set of ``sufficient'' consumption bundles consists of those whose welfare is deemed high enough, under the assumption that all agents have the same welfare for the same bundles.  Third, we also investigate the concept of sufficientarianism in concrete economic environments, so that if a consumption bundle $x$ is deemed sufficient, so is any bundle which is larger than $x$.  

Our work also describes an ethical condition that we believe to be novel and highly related to the concept of sufficientarianism:  for lack of a better term, we call this principle \emph{sufficientarian judgment}.  Roughly speaking, sufficientarian judgment asks that there should be an absolute priority on helping the worst off.  Here is how we formalize the idea.  We do not take individual preferences as a primitive.  Instead, our primitive is a ranking over allocations.  We need to be able to formalize the idea that one agent is worse off than another.  We choose to do so in a way that we feel is particularly weak:  suppose that we begin with an allocation in which all agents have identical consumption.  We can imagine a social planner as identifying all agents as equally well-off with such an allocation.\footnote{This seems to be the idea behind egalitarian-equivalence for example, see \citet{pazner}.}

Now, imagine changing one agent's consumption; call this agent Alice.  All remaining agents continue to consume the previously identical consumption.  If the allocation where Alice has the new consumption has rendered the society \emph{strictly worse off} than the original allocation, we can infer that Alice has been made worse off than the remaining agents.\footnote{Our framework envisions that this ``worse off'' or implicit welfare ranking is the same for all agents.}  Sufficientarian judgment asks that there is no way to improve upon this new allocation by altering any other agent's consumption.  We cannot compensate society either fully or partially for Alice's loss by rewarding Bob, say.  Hurting Bob would also not help society.  The only way to improve society's welfare is to alter Alice's consumption.

To this end, sufficientarian judgment is a very strong requirement, asking that society should prioritize the least fortunate in an absolute fashion.  But it asks more: rewarding more fortunate agents cannot benefit society in any sense.  Though this is a strong property, nevertheless, a particularly rich class of rankings satisfies it.  We describe some of these when discussing the independence of the axioms.
%{\color{teal}Simone can you add some of the references and concepts here?}

Our main result establishes that a ranking over allocations satisfies sufficientarian judgment, and two more familiar axioms (separability and symmetry) if and only if it is sufficientarian.

Previous works base their results on a principle called "prioritarian threshold," as well as other ethical axioms related to those of \citet{mariotti2009non, mariotti2012allocating, mariotti2013impossibility, alcantud2013, lombardi}. We will not go into detail about these axiomatizations, but one key point is that the prioritarian threshold specifically references a sufficient set, via a threshold. It does this in a context in which the observable is a collection of real-valued welfare levels, rather than allocations, and the axiom is parametric in a sense, taking as a parameter a threshold level that defines the sufficient set in this context.

In contrast, our work endogenizes the notion of a sufficient set. We can compare our work to the previous works by considering more structured economic environments. Though our main result takes place in an abstract environment, we develop several corollaries that apply to concrete economic environments.

For example, if consumption lies in a set with an ordered structure, then a natural monotonicity property implies that the sufficient set is closed upwards (upper-comprehensive). Our work extends naturally to classical commodity space, or more generally to any meet-semilattice. In this framework, we describe a natural generalization of the version of sufficietarianism presented in \citet{Mariotti_2021}. This generalization arises when we imagine that the sufficient set is determined by a threshold: There is a fixed threshold bundle $\beta$, and the sufficient set consists of all bundles at least as large as $\beta$. We can capture this version of sufficientarianism (called \emph{threshold sufficientarianism}) by a simple axiom, asking that different goods in commodity space cannot substitute for one another. That is, if a decrease in food for an agent renders society worse off, increasing her clean water will not bring society back to the starting point. Usually, this axiom, which we call $\geq$-complements, suffices to axiomatize threshold sufficientarianism, but sometimes an additional continuity condition is needed.

Finally, we also aim to show that sufficientarianism is related to the leximin property of ranking, via a weakening of sufficientarian judgment.  Imagine that we can assign a utility to each bundle and that the utility of a bundle can take a value of either $0$ or $1$.  Then maximizing any anonymous Bergson-Samuelson social welfare function will maximize the number of agents getting a utility of $1$.  This goes in particular for the leximin ranking and for the sufficientarian ranking.

However, utilizing a leximin ranking after assigning a utility to all bundles more generally results in a ranking that satisfies a weakening of sufficientarian judgment, which we term \emph{weak sufficientarian judgment}.  This property again starts with the idea that starting from an allocation in which all agents consume the same thing, a change in Alice's consumption hurts society.  Weak sufficientarian judgment allows that a subsequent change in Bob's consumption may cause a social gain, but can never bring society back to the initial social welfare level, where all agents were consuming the same thing.

Associating a utility with each bundle, and then ranking allocations according to the imputed leximin ranking between the allocations gives a ranking of allocations that satisfies this property.  We also show that, for two agents, such binary relations are the only ones satisfying weak sufficientarian judgment, symmetry, and separability.  We call these binary relations \emph{endogenous leximin}.

The paper proceeds as follows.  We conclude the introduction with some related literature; then we proceed directly to the model and abstract result (Section~\ref{sec:model}).  Section~\ref{sec:structured} investigates our concepts in concrete economic models, in particular proposing characterizations of threshold sufficientarianism.  Section~\ref{sec:leximin} describes the weakening of the sufficientarian judgment and the connection to the leximin ranking.  Finally, Section~\ref{sec:conclusion} concludes.

\subsection{Related Literature}

\citet{Mariotti_2021} introduced two substantive axioms, termed \emph{absolute individual improvement} and \emph{prioritarian threshold}.  We will not describe these axioms in detail.  The first of these asks that if one allocation $b$ strictly socially dominates another $a$, and in $b$ some agent's welfare is not maximal, then replacing that agent's welfare by the maximum possible welfare in $a$, and by anything else in $b$ does not lead to a reversal in the ranking.  The second of these, prioritarian threshold, is the one which references the sufficientarian threshold.  It asks for some level of welfare $\beta$ such that if any agent consumes strictly less than $\beta$, then replacing the welfare of all agents by $\beta$ makes society strictly better off.  Our axiom of sufficientarian judgment is logically independent of each of these two, and in a strong sense: in the setting of their paper, we can construct binary relations satisfying all of our axioms except separability, but neither of theirs, and we can construct binary relations satisfying all of their axioms except separability, but not sufficientarian judgment.

Aside from the previously mentioned work on sufficientarianism, \citet{Mariotti_2021, Bossert2021, Bossert2022}, the sufficientarian judgment axiom, as well as the notion of sufficientarianism, is closely related to a collection of lattice-theoretic axioms employed in characterizing similar representations.  The connection here is best understood considering a two-agent world.  Suppose that $a,b,c\in \mathbb{R}$, where these now represent welfare levels.  Sufficientarian judgment is the requirement that $(b,c)$ is indifferent to the socially worse of $(b,b)$ or $(a,c)$.  In a sense, the axiom requires that the binary relation commute with respect to ``meets,'' in a lattice-theoretic sense.  See for example \citet{miller2008group,chambers2014inefficiency,chambers2014scholarly,chambers2018}.  In these works, it is explained that the main idea is originally due to \citet{kreps1979} in his study of preference for flexibility over menus.  Related is the work of \citet{hougaard1998,christensen1999}, where the lattice-theoretic version of the property appears in the cardinal form.

The properties of sufficientarian judgment and weak sufficientarian judgment are similar in spirit to basic concepts such as the Pigou-Dalton transfer principle or other generalizations in economic environments, for example, \citet{fleurbaey2007two}, \citet{fleurbaey2014}, or \citet{sprumont}.  These pieces of literature present forms of absolute priority for the worst off.  To understand this claim, consider a two-agent economy.  Weak sufficientarian judgment asks that, starting from an allocation in which both agents consume the same, if we change $1$'s consumption, and get a worse social bundle, then any change in agent $2$'s consumption cannot bring society back to the welfare level of the starting point.  What this means is that the original bundle strictly dominates any bundle in which agent $1$'s imputed individual welfare has lowered from it.  

%In a sense, weak sufficientarian judgment is an extreme application of the
%The distinction here is that we do not work directly with classical commodity spaces, nor with welfare levels, but the ideas behind the principles are clearly related.  

What we call ``endogenous leximin'' is a concept that has existed in different forms in the literature for some time, although we are not aware of any work on this topic in a totally abstract environment.  Most (but not all) of this work supposes that individuals have different preferences, and represents these preferences via some type of canonical utility representation.  \citet{sprumont} describes a special case of endogenous leximin in an economic environment where individuals have preferences, and the common ranking underlying endogenous leximin is naturally in accordance with these preferences.  The monograph by \citet{fleurbaey} contains many references to works studying rankings of allocations where individuals' preferences differ, and where a generalization of endogenous leximin allows each individual to possess their own utility function.  For example, axiomatizations based on the classical money metric representation or on the representation first described by \citet{wold} have been studied.  In particular, they describe axiomatizations of this type of endogenous leximin ranking in several classical economic environments.  %Obviously, we can interpret our relation $\geq^*$ as a preference over alternatives that is common to all agents.

%His axiom of ``dominance aversion'' also bears resemblance to our notion of weak sufficientarian judgment.

%In extending sufficientarianism to a multi-dimensional framework, this paper navigates previously unexplored territory, diverging from the predominantly welfaristic focus in the literature. We recognize a parallel between our approach to multi-dimensional sufficientarianism and the broader economic themes of multidimensional poverty measures\footnote{See Atkinson, A. B. (2003). \textit{Multidimensional deprivation: contrasting social welfare and counting approaches}. The Journal of Economic Inequality, 1, 51-65; Alkire, S., and Foster, J. (2011). \textit{Counting and multidimensional poverty measurement}. Journal of public economics, 95(7-8), 476-487. These works illustrate various ways to measure poverty across different dimensions, providing methods for weighting and combining multiple indicators.} and Nussbaum's capability approach\footnote{Nussbaum, Martha (2011). \textit{Creating capabilities: the human development approach}. Cambridge, Massachusetts: The Belknap Press of Harvard University Press. Nussbaum's capability approach emphasizes the importance of considering a variety of human abilities or "capabilities" in evaluating well-being, aligning with our multi-dimensional interpretation of sufficientarianism.}, which also emphasize the complexity of well-being across multiple dimensions. Our approach builds on these foundations by considering sufficientarian orderings in a multi-dimensional setting, capturing the nuanced relationships between dimensions.

In extending sufficientarianism to a multidimensional framework, this paper navigates previously unexplored territory, diverging from the predominantly welfaristic focus in the literature. We recognize a parallel between our approach to multidimensional sufficientarianism and the broader economic themes of multidimensional poverty measures \citep{atkinson2003, alkire2011} and Nussbaum's capability approach \citep{nussbaum2011}, which also emphasize the complexity of well-being across multiple dimensions. Our approach builds on these foundations by considering sufficientarian orderings in a multidimensional setting, capturing the nuanced relationships between dimensions.

\section{The Model and Results}\label{sec:model}

In this section, we explore the concept of economic sufficientarianism in an abstract environment.

Let $A$ be a nonempty set, denoting the set of possible consumption bundles. For example, $A$ may consist of a classical commodity space, it may consist of unordered bundles of goods, or it may consist of a set of life outcomes. Alternatively, $A$ may index characteristics that may be associated with a given individual:  health status, life satisfaction, and so forth.  Finally, in accordance with the classical study of sufficientarianism, $A$ may consist of welfare levels.  Elements of $A$ are the objects which are assigned to agents.  Typical elements of $A$ will be denoted by $a,b\in A$.

Let $N=\{1,\ldots,n\}$ be a finite set, where $n\geq 2$ denotes the number of individuals in the society.  $N$ represents the set of agents in society.  

We can think of elements of $A^N$ as \emph{ allocations}.  Every $x\in A^N$ associates a unique member of $A$ with each member of society.  

For $x\in A^N$, we typically refer to the $i$th element of $x$ as $x_i\in A$.  Our goal is to rank members of $A^N$.  To this end, given is a binary relation $\succeq$ on $A^N$.  This binary relation is our main objective of study.  We now turn to the properties we ask of this relation.

The following two are standard properties, though the second of them is not explicitly stated in \citet{Mariotti_2021}.  It follows as a consequence of some of their other axioms.

\textbf{Weak order}: $\succeq$ is a weak order.\footnote{That is, $\succeq$ is complete and transitive.}

\textbf{Symmetry}:  Let $\sigma:N\rightarrow N$ be a bijection and let $a\in A^N$.  Then $a \sim a \circ \sigma$.

As in \citet{Mariotti_2021}, we postulate a simple separability condition on $\succeq$.  The interpretation of this axiom is much the same as it is in that paper.

%Following Alcantdu et al.... 

To this end, for any $x,y\in A^N$ and $M\subseteq N$, $(x_M,y_{-M})$ denotes the member of $A^N$ for which $(x_M,y_{-M})_i = x_i$ when $i\in M$ and $(x_M,y_{-M})_i=y_i$ when $i\notin M$.

%For any $i\in N$, $x_i \in X$ and $z\in X^N$, we write $(x_i,z_{-i})$ to denote the member of $X^N$ for which $(x_i,z_{-i})_i = x_i$ and if $j\neq i$, then $(x_i,z_{-i})_j = z_j$.  Similarly, for 

%Our version of separability is written in a fashion that appears to be apparently stronger than the one in \citet{Mariotti_2021}, a standard induction argument establishes that it is equivalent to theirs under the weak order hypothesis.

\textbf{Separability}: For any $M\subseteq N$, and any $x,y,x',y'\in A^N$, $(x_M,y_{-M})\succeq (x'_M,y_{-M})$ \textit{iff} $(x_M,y'_{-M})\succeq (x'_M,y'_{-M})$.

This form of separability is written in a fashion that is apparently stronger than the one in \citet{Mariotti_2021}. However, a standard induction argument establishes that it is equivalent to theirs under the weak-order hypothesis.

We now introduce our main innovation. First, some notation. Let $a,b,c\in A$, and $i,j\in N$, for which $i\neq j$.  Then $b$, with an abuse of notation, also represents the member of $A^N$ for which $b_k = b$ for all $k\in N$.
We use the notation $a_i b$ to represent that member of $A^N$ for which $(a_i b)_i = a$ and for all $k\neq i$, $(a_i b)_k = b$, and $c_j(a_i b)$ to be the member of $A^N$ for which $(c_j (a_i b))_i = a$, $(c_j(a_i b))_j= c$ and for all $k\notin N\setminus \{i,j\}$, $(c_j (a_i b))_k = b$.

%To do so, we first need to define a notion of what it means for one bundle $a\in A$ to be better than another bundle $b\in B$.  We prefer to do so without referencing separability.  

%For the next property, for any $a,b\in A$, define $a\succeq^* b$ if $(a,\ldots,a)\succeq (b,\ldots,b)$.  Similarly, for $x\in X$, $a\in A$, and $j\in N$, define $a_jx$ to be the member of $X$ for which $(a_jx)_i=a$ when $i=j$ and $(a_jx)_k=x_k$ otherwise.

\textbf{Sufficientarian judgment:} Let $a,b,c\in A$ and suppose that $i\neq j$.  Then $b \succ a_i b$ implies $a_i b \succeq c_j (a_i b)$.  

%Let $a,b,c\in A$ and $i,j\in N$, with $i\neq j$.  Let $x=(b,\ldots,b)$.  If $x \succ a_ix$, then $a_ix \succeq c_j(a_ix)$.

In words, sufficientarian judgment says the following.  Starting from an allocation in which all agents are treated identically in terms of consumption, receiving $b$, suppose a change in $i$'s alternative to $a$ forces social welfare to drop strictly.  Implicitly, this must happen because $i$ becomes worse off than the other agents.  What sufficientarian judgment asks is that this drop in social welfare can never be compensated, either fully or partially, by giving some other agent $j$ consumption of $c$.  This is true no matter how much of a gain $c$ would cause to agent $j$.  Society should not treat $i$ and $j$'s welfare as substitutes, at least in the case in which $i$'s welfare level has become less than $j$'s.  

Sufficientarian judgment applies only when the starting position of all agents is identical (the alternative $b\in A$ is consumed by all agents), though the intuition would also be valid in case the starting position of $i$ is worse than that of agent $j$.  That said, we can only meaningfully speak of a starting position of $i$ being worse than that of $j$ under the hypotheses of symmetry and separability. % We wish to refrain from relying on these axioms in formulating the condition.

In the context of sufficientarian judgment, we can infer that agent $i$ is worse off than the remaining agents precisely because in moving from $b$ to $a_ib$, society has become worse off.  This involves an implicit assumption that all members of society were equally well off when consuming $b$; an implicit interpersonal comparison.  The idea behind the axiom then is that society should prioritize raising the welfare of poor agents as much as possible, to the extent that any gain in welfare for a richer agent would have no effect on the social ranking.  

On the other hand, if a loss in social welfare results from a loss of welfare of a rich agent, we want to allow society to benefit from a gain in welfare to a poor agent.  To understand, suppose $A = \mathbb{R}$, where $N=\{1,2\}$, and that we are considering $x\in A^N$ given by $(5,0)$.  It is reasonable to assume that $(5,0) \succ (0,0)$; so that a loss of $5$ units for the richer agent (agent $1$) hurts society.  At the same time, it is reasonable to suppose that $(0,5) \succ (0,0)$ as well: that loss can be compensated for by a gain in utility by agent $2$.

What sufficientarian judgment rules out is a situation whereby, for example, $(5,5) \succ (0,5)$ and $(0,7)\succ (0,5)$.  In this example, agent $1$'s move from $5$ to $0$ has decreased social welfare.  Starting from an implicit assumption that the two agents are equally well off under $(5,5)$, this means that agent $1$ has become worse off in the move to $(0,5)$.  She is now the ``poorer'' of the two agents.  We want to preclude that this drop in wealth for agent $1$ can be compensated for by enriching agent $2$, which rules out $(0,7) \succ (0,5)$.

Thus, it asserts that the priority should be on bringing the agent whose welfare dropped back up to the welfare of the remaining agents; in this sense it is a strongly egalitarian property.

We say that a binary relation $\succeq$ is \emph{sufficientarian} if there exists a set $S\subseteq A$ such that for all $x,y\in A^N$, $x \succeq y$ iff $|\{i\in N:x_i \in S\}|\geq |\{i\in N:y_i \in S\}|$.  We refer to the set $S$ as the \emph{sufficient set}.

This formalizes the notion of economic sufficientarianism.  The set $S$ becomes a parameter of the binary relation, the goal being to ensure as many agents as possible have consumption in $S$.

The sufficient set $S$ may be empty (or the entire set), in which case the binary relation $\succeq$ coincides with total indifference.  In case this is a concern, the following (standard) property rules this out.

\textbf{Non-degeneracy}: There exists $x,y\in A^N$ for which $x\succ y$.

\begin{theorem}\label{thm:mainresult}A binary relation $\succeq$ is sufficientarian \textit{iff} it satisfies weak order, symmetry, separability, and sufficientarian judgment.  These axioms are independent.\end{theorem}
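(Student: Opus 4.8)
The plan is to prove the two implications separately and then dispatch independence with four explicit relations, treating the endogenous construction of the sufficient set as the heart of the argument.

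Necessity is routine. If $\succeq$ is sufficientarian with set $S$, then $x\succeq y$ iff $\sum_i \mathbf{1}[x_i\in S]\ge \sum_i \mathbf{1}[y_i\in S]$. This count is a permutation-invariant integer, which gives weak order and symmetry at once; it is additive across coordinates, so the common block $y_{-M}$ cancels on both sides of the separability biconditional; and for sufficientarian judgment one checks that $b\succ a_i b$ forces $b\in S$ and $a\notin S$, after which $a_i b\succeq c_j(a_i b)$ reduces to the inequality $1\ge \mathbf{1}[c\in S]$, which always holds.

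For sufficiency, the first step is to extract from $\succeq$ a \emph{marginal relation} $\succeq^*$ on $A$: declare $a\succeq^* a'$ iff $(a,z_{-i})\succeq(a',z_{-i})$ for one (hence, by separability, every) context $z$ and coordinate $i$. Symmetry makes this independent of $i$, and weak order makes $\succeq^*$ a weak order on $A$. Reading the two parts of sufficientarian judgment through separability, the hypothesis $b\succ a_i b$ is exactly $b\succ^* a$, while the conclusion $a_i b\succeq c_j(a_i b)$ is exactly $b\succeq^* c$. Thus the axiom asserts: if $b\succ^* a$ for some $a$, then $b\succeq^* c$ for every $c$. The key structural lemma follows immediately: every element of $A$ is either $\succeq^*$-maximal or $\succeq^*$-minimal, so $\succeq^*$ collapses to at most two indifference classes. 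I take $S$ to be the set of $\succeq^*$-maximal elements; this is the candidate sufficient set, built purely from the primitive and hence endogenous.

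It remains to show $x\succeq y$ iff $k(x)\ge k(y)$, where $k(x)=|\{i:x_i\in S\}|$. Since all elements of $S$ are mutually $\succeq^*$-indifferent and likewise all of $A\setminus S$, separability lets me replace each coordinate of any allocation by a fixed representative without disturbing the social ranking, reducing every allocation to a canonical one assembled from two fixed symbols $s_0\in S$ and $b_0\in A\setminus S$. If $k(x)=k(y)$, these canonical forms are permutations of one another, so symmetry gives $x\sim y$. If $k(x)>k(y)$, I pass from the canonical form of $y$ to that of $x$ by switching coordinates from $b_0$ to $s_0$ one at a time; each switch is a strict social improvement, because $s_0\succ^* b_0$ holds in every context (separability promotes the strictness of $\succ^*$ to all backgrounds), and transitivity then yields $x\succ y$. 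The degenerate case $S=A$ corresponds to total indifference, which is sufficientarian with sufficient set $A$. For independence I exhibit four relations, all built on a count. Dropping symmetry, a weighted count $\sum_i w_i\mathbf{1}[x_i\in S]$ with unequal positive weights keeps weak order, separability, and sufficientarian judgment but fails symmetry. Dropping sufficientarian judgment, the plain sum on $A=\mathbb{R}$, $x\mapsto\sum_i x_i$, is a separable symmetric weak order whose marginal relation has more than two levels. Dropping weak order, the ``gap'' relation $x\succeq y$ iff $k(x)\ge k(y)-1$ is complete, symmetric, and separable (the context block again cancels) and satisfies sufficientarian judgment vacuously, since $b\succ a_i b$ can never hold when the two counts differ by at most one, yet its indifference is intransitive. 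Dropping separability, take $n=3$ and $A=\{0,1\}$ and rank allocations by their count of $1$'s in the non-monotone order $2,1,3,0$ (best to worst): the only count comparisons sufficientarian judgment can invoke from a uniform allocation are $0$ versus $1$ and $3$ versus $2$, and the chosen order makes $1$ better than $0$ and $2$ better than $3$, so the hypothesis $b\succ a_i b$ is never met, giving a symmetric weak order that satisfies the axiom but violates separability. I expect the main obstacle to be the structural lemma: recognizing that separability reduces sufficientarian judgment to a statement purely about $\succeq^*$, and that this statement forces the two-level collapse. Everything downstream is bookkeeping with symmetry, separability, and transitivity.
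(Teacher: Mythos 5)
Your proof is correct and follows essentially the same route as the paper's: you extract the marginal relation $\geq^*$ via separability, observe that sufficientarian judgment translates into ``$b \succ^* a$ for some $a$ implies $b \succeq^* c$ for all $c$,'' collapse $\geq^*$ to at most two indifference classes, take $S$ to be the top class, and recover the counting representation by a coordinate-by-coordinate chain argument. The only divergence is in two of the independence examples (your intransitive ``gap'' relation for weak order and the non-monotone count ordering for separability, versus the paper's permutation-equivalence relation and maxmin rule), and both of your alternatives check out.
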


\begin{proof}Let $\succeq$ satisfy the axioms.  Define the ranking $\geq^*$ on $A$ by $a \geq^* b$ iff $a_i c \succeq b_i c$ for some $c\in A$.  Observe that by separability, $a_i c\succeq b_i c$ iff $a_i d\succeq b_i d$ for any $c,d\in A$.  Obviously $\geq^*$ is a weak order.

We claim that $\geq^*$ has at most two indifference classes.  To this end, and by contradiction, suppose that there are $a^*,b^*,c^*\in X$ for which $a^*>^* b^* >^* c^*$.  

We invoke separability.  First we simplify notation.  Fix agents $1,2\in N$.  We define the binary relation $\succeq^2$ on $A^{\{1,2\}}$ by $(a,b)\succeq^2 (c,d)$ if for any $x,y\in A^N$ for which $x_{\{1,2\}}=(a,b)$ and $y_{\{1,2\}}=(c,d)$, where $x_i = y_i$ for all $i\in N\setminus \{1,2\}$, we have $x \succeq y$.  By separability and weak order, $\succeq^{\{1,2\}}$ is itself a weak order.

Now observe that $(b^*,b^*) \succ^2 (b^*,c^*)$ by separability (since $b^* >^* c^*$).  So, $b^* \succ c^*_2b^*$, also by separability.  By sufficientarian judgment, $c^*_2b^* \succeq a^*_1(c^*_2b^*)$, and so by separability it follows that $(b^*,c^*) \succeq^2 (a^*,c^*)$.  But this contradicts separability (as $a^* >^* b^*$).

%Recalling the definition of $\succeq^2$, and invoking sufficientarian judgement, we claim that the first case implies $(b^*,c^*)\sim^2 (b^*,b^*)$.  

%But $y^*\succ^* z^*$, so by Lemma~\ref{lem:separability}, and the definition of $\succeq^2$, we have $(y^*,y^*)\succ^2 (x^*,z^*)$, a contradiction.  The second case implies $(y^*,z^*)\sim^2 (x^*,z^*)$.  Again by Lemma~\ref{lem:separability} and the definition of $\succeq^2$, we have $(x^*,z^*)\succ^2 (y^*,z^*)$, also a contradiction.

So there can be at most two indifference classes for $\geq^*$.  Denote by $S\subseteq A$ the higher of these two indifference classes.  

Let $x,y\in A^N$.  We want to claim that if $|\{i\in N:x_i \in S\}|\geq |\{i\in N:y_i\in S\}$, then $x \succeq y$, similarly if $|\{i\in N:x_i \in S\}|> |\{i\in N:y_i \in S\}|$, then $x \succ y$.  By symmetry and transitivity, we may assume without loss that for each of $x$ and $y$, we have $x_1 \geq^* x_2 \ldots \geq^* x_n$ and $y_i \geq^* y_2 \ldots \geq^* y_n$.  

We argue by induction.  Suppose that $|\{i\in N:x_i \in S\}|\geq |\{i\in N:y_i \in S\}|$.  This implies that for all $i\in N$, we have $x_i \geq^* y_i$.  Consequently, by repeatedly invoking separability, $(x_1,x_2,\ldots,x_n) \succeq (y_1,x_2,\ldots,x_n) \succeq (y_1,y_2,x_3,\ldots,x_n)\ldots \succeq (y_1,\ldots,y_n)$ so that $x \succeq y$ by transitivity.  Similarly, if $|\{i\in N:x_i \in S\}|> |\{i\in N:y_i \in S\}|$, then there is in addition some $i\in N$ for which $x_i >^* y_i$.  Consequently, in the preceding chain, one of the instances of $\succeq$ is actually $\succ$ and so $(x_1,\ldots,x_n) \succ (y_1,\ldots,y_n)$ by weak order.

For the converse, weak order, symmetry, and separability are easily verified.  %Consider now $x,y\in X^N$, defined as in the hypothesis of sufficientarian judgment (so that for all $i,j\in N$, $x_i \succ^* x_j$ implies $y_i \geq^* y_j$).  Without loss, we may assume that $x_1 \geq^* x_2 \ldots \geq^* x_n$.  It follows that $y_1 \geq^* y_2 \ldots \geq^* y_n$.  

%Clearly $\geq^*$ has exactly two indifference classes, the higher of which corresponds to $S$.  We claim that $\{i\in S:y_i \in S\}\subseteq \{i\in N:x_in \in S\}$.  This follows as the set $\{i\in N:x_i\in S\}$ and $\{i\in N:y_i\in S\}$ are each (possibly empty) intervals in $S$, which whenever nonempty contain $1$.  The fact that $|\{i\in N:x_i\in S\}|\geq |\{i\in N:y_i \in S\}|$ implies that $\{i\in N:y_i \in S\}\subseteq \{i\in N:x_i \in S\}$.  Consequently, in defining $z$ as in the hypothesis of sufficientarian monotonicity, we have $\{i\in N:z_i \in S\}=\{i\in N:y_i\in S\}$, so that $z\sim y$.   

For sufficientarian judgment, let $a,b,c\in A$. Suppose that $b \succ a_ib$.  By definition, this means that $b\in S$ and $a\notin S$.  Since $b\in S$, in either case, whether $c\in S$ or $c\notin S$, we have $a_i b\succeq c_j(a_ib)$.  

On the independence of the axioms:  We illustrate by examples that the three axioms are independent. These examples will illustrate the general structure of rules satisfying these axioms.

\textbf{Weak order}\\
Let $\succeq$ be the relation defined by $x \succeq y$ \textit{iff} $y = x \circ \sigma$ for some permutation $\sigma:N\rightarrow N$.  Symmetry and sufficientarian judgment are automatically satisfied (sufficientarian judgment because the antecedent condition is never satisfied, since there are no strict rankings according to this relation).  Separability is similarly easy to verify.

\textbf{Symmetry}\\
Say that a binary relation $\succeq$ is \emph{weighted sufficientarian} if there is a function $\lambda:N\rightarrow\mathbb{R}_{++}$ and a set $S \subseteq A$ such that for all $x,y\in A^N$, $x\succeq y$ if and only if $\sum_{i\in N}\lambda_i \mathbf{1}_{\{x_i\in S\}} \geq \sum_{i\in N}\lambda_i \mathbf{1}_{\{y_i\in S\}}$.\footnote{Or just an ordinal ``qualitative measure'' ranking the sets.}  A related generalization, which would feature each individual possessing their own filter $\calf_i$, generally fails sufficientarian monotonicity.

Another class of solutions are \emph{dictatorships}, where there is a weak order $\geq$ on $A$ such that for all $x,y\in A^N$, $x \succeq y$ iff $x_i \geq y_i$.%a relation $\succeq_i$ for which $\geq \subseteq \succeq_i$, so that for any $x,y\in A^N$, $x \succeq y$ if and only if $x_i \succeq_i y_i$. 

%Still other, hybrid, variations of these two methods would satisfy the remaining axioms.  We leave this to future research.

\textbf{Separability}\\
Binary relations violating reinforcement are similarly many.  A canonical example would be obtained in the case of $A = [0,1]$, whereby for all $x,y\in A^N$, $x \succeq y$ if and only if $\min_i x_i \geq \min_i y_i$.  %For a more general set $X$, one would construct a $\leq$-chain $\mathcal{C}$ with the property that for every $x\in X$, there is a uniquely $\leq$-maximal element of $\mathcal{C}$.  Then define, for any $N\in\caln$ and $x,y\in X^N$, $x\succeq_N y$ if and only if said maximal element for $\bigwedge_{i\in N}x_i$ $\leq$-dominates the maximal element for $\bigwedge_{i\in N}y_i$. However, even in the case of $X = [0,1]$, many other rules are easy to describe. For any $N\in\caln$ and $x\in X^N$, and any $n\in \{1,\ldots,|N|\}$, let $x^*(n)$ denote the $n$-th highest value of $x$.  Then define $U_N:X^N\rightarrow\mathbb{R}$ as $U_N(x) = \inf_n nx^*(n)$.  For any $x,y\in X^N$, assign $x\succeq_N y$ if and only if $U_N(x) \geq U_N(y)$.  We leave the study of this class of rules to future research.\\

\textbf{Sufficientarian judgment}\\
There are many binary relations that violate the sufficientarian judgment.  Let $u:A\rightarrow \mathbb{R}$ be a function and define $x\succeq y$ iff $\sum_i u(x_i) \geq \sum_i u(y_i)$.  Then as long as the image of $u$ contains at least three values, this binary relation violates sufficientarian judgment.

\end{proof}

\begin{remark}Sufficientarian judgment could be replaced by the following dual condition in Theorem~\ref{thm:mainresult} or any of the results to follow.  The proof is essentially identical to that of Theorem~\ref{thm:mainresult} and is thus omitted.

\textbf{Dual Sufficientarian Judgment}: Let $a,b,c\in A$ and suppose that $i\neq j$.  Then $a_i b \succ b$ implies $c_j (a_i b) \succeq a_i b$.%$a_i b \succeq c_j (a_i b)$.  

The interpretation of this dual condition has less ethical content than does sufficientarian judgment but asks that if social welfare goes up in replacing agent $i$'s consumption with $a$ from $b$, then one cannot bring it back down by changing some other agent $j$'s consumption.  
\end{remark}

\section{Structured economic environments}\label{sec:structured}

In this section, we suppose that $A$ represents a more classical economic environment, for example, a commodity space.  To this end, we suppose it is endowed with a binary relation $\geq$, which we will assume to be a preorder.\footnote{That is, reflexive and transitive.  These assumptions are without loss, if $\geq$ is not a preorder the following monotonicity assumption would be equally valid for the smallest preorder containing it.}

The framework of \cite{Mariotti_2021} conforms to this model:  their assumption is that $A=[0,1]$, and they explicitly reference axioms relating to $\geq$.  Chief among these is the following weak form of monotonicity, asking that increases in consumption for all agents never hurt society.  

\textbf{Monotonicity}: Let $x,y\in A^N$ for which for all $i\in N$, $x_i \geq y_i$.  Then $x \succeq y$.

Say that a subset $S\subseteq A$ is \emph{monotone} (or \emph{upper comprehensive}) if $a\in S$ and $b\geq a$ implies $b\in S$.

We say that a ranking is \emph{monotone-sufficientarian} if there is a monotone set $S\subseteq A$ such that for all $a,b\in A^N$, $a \succeq b$ iff $|\{i\in N:a_i \in S\}|\geq |\{i\in N:b_i \in S\}|$.\footnote{This family of rankings was suggested to us by the anonymous associate editor.}

The following Corollary of Theorem~\ref{thm:mainresult} asserts that the addition of monotonicity to the preceding axioms ensures that the sufficient set $S$ is monotone.  Its proof is obvious.

\begin{corollary}A binary relation $\succeq$ is monotone sufficientarian \textit{iff} it satisfies weak order, symmetry, separability, sufficientarian judgment, and monotonicity.
\end{corollary}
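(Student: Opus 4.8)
The plan is to derive this corollary directly from Theorem~\ref{thm:mainresult} by showing that, in the presence of the other axioms, monotonicity of $\succeq$ is equivalent to monotonicity (upper-comprehensiveness) of the sufficient set $S$. Since the corollary claims equivalence, I would argue both directions, leveraging the fact that Theorem~\ref{thm:mainresult} already establishes that weak order, symmetry, separability, and sufficientarian judgment are equivalent to being sufficientarian with \emph{some} sufficient set $S$. The remaining content is purely about the interaction between $\geq$ and that set.

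For the forward direction, suppose $\succeq$ is monotone-sufficientarian via a monotone set $S$. Weak order, symmetry, separability, and sufficientarian judgment follow immediately from Theorem~\ref{thm:mainresult}, since $\succeq$ is in particular sufficientarian. For monotonicity of $\succeq$, take $x,y\in A^N$ with $x_i\geq y_i$ for all $i$. Because $S$ is monotone (upper comprehensive), $y_i\in S$ implies $x_i\in S$, so $\{i\in N: y_i\in S\}\subseteq\{i\in N: x_i\in S\}$, whence $|\{i: x_i\in S\}|\geq |\{i: y_i\in S\}|$, giving $x\succeq y$ as required.

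For the converse, suppose $\succeq$ satisfies all five axioms. By Theorem~\ref{thm:mainresult}, $\succeq$ is sufficientarian via some set $S$ (indeed the set $S$ constructed there as the top indifference class of the induced relation $\geq^*$). It remains only to verify that this $S$ is monotone. So take $a\in S$ and $b\geq a$; I want $b\in S$. The natural step is to build two allocations that differ only in one coordinate where $a$ is replaced by $b$: consider an allocation $x$ with one coordinate equal to $b$ and another equal to $a$ in a configuration where the count of $S$-members would strictly increase if and only if $b\in S$, then apply monotonicity coordinatewise to force the count not to decrease. Concretely, comparing $(b,a,\dots)$ with $(a,a,\dots)$ via monotonicity of $\succeq$ yields $(b,\dots)\succeq(a,\dots)$, which in terms of $S$-counts forces $b\in S$ whenever the rest of the profile is chosen so that the indifference cannot be explained any other way.

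The one step deserving care, and the likely main obstacle, is making that last count-comparison airtight: I must choose the fixed background coordinates so that the only way the sufficientarian count can fail to decrease when passing from $b$ to $a$ in a single coordinate is for $b$ and $a$ to lie on the same side of $S$, and then use $a\in S$ to conclude $b\in S$. This is routine given the explicit counting representation, and the authors indeed remark that the proof is obvious; accordingly I would keep this verification brief and lean entirely on Theorem~\ref{thm:mainresult} for all structural content, reducing the corollary to the elementary observation that $\succeq$-monotonicity and $S$-monotonicity correspond under the counting representation.
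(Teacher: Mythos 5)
Your proposal is correct and is exactly the argument the paper has in mind when it declares the proof ``obvious'': reduce everything to Theorem~\ref{thm:mainresult} and check that $\succeq$-monotonicity corresponds to upper-comprehensiveness of $S$ under the counting representation. The one step you flag as delicate is in fact not: for any background $c$, comparing $b_i c$ with $a_i c$ gives $b_i c \succeq a_i c$ by monotonicity, and since the background contributes identically to both counts, the representation forces $\mathbf{1}_{\{b\in S\}} \geq \mathbf{1}_{\{a\in S\}} = 1$, so no special choice of the remaining coordinates is needed.
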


We are now in a position to compare our results with those of \citet{Mariotti_2021}.  Say a binary relation $\succeq$ is \emph{threshold sufficientarian} if there exists $b\in A$ for which for all $x,y\in X^N$, $x \succeq y$ iff $|\{i\in N:x_i \geq b\}|\geq |\{i\in N:y_i \geq b\}|$.

Let us now consider the specific environment in which $A = [0,1]$ with its usual order $\geq$; this is the environment studied by \citet{Mariotti_2021}.  Every threshold sufficientarian binary relation is clearly monotone sufficientarian.  However, there are monotone sufficientarian binary relations which are not threshold sufficientarian.  For example, fixing some $b\in A$, we may consider a binary relation for which for all $x,y\in A^N$, $x \succeq y$ iff $|\{i\in N:x_i > b\}|\geq |\{i\in N:y_i > b\}|$.  A rule like this might be called ``strictly sufficientarian,'' but the main distinction is that only the individuals getting strictly more than $b$ have a sufficient amount.  Those getting exactly $b$ do not consume within the sufficient set.  

In the next section, we will discuss conditions that rule out these more general binary relations. However, the key point is that the distinction between the monotone sufficientarian rules and threshold sufficientarian rules in a single-dimensional environment like this one is largely technical.

To this end, we will now discuss the axioms of \citet{Mariotti_2021}.  Our goal is to show that there is no logical relation between their axioms and sufficientarian judgment.  We will do so by establishing that there are binary relations satisfying all of our axioms, except separability, and which violate both of their axioms.  We will also show that there are binary relations satisfying all of their axioms, except separability, and which violate sufficientarian judgment.  

As a first point, they also invoke the weak order, monotonicity, and separability axioms.  The axioms of theirs which differ from ours are as follows.  The motivation of these properties can be found in \citet{Mariotti_2021}.

\textbf{Absolute individual improvement}: Let $x,y\in A^N$ and let $i\in N$.  Suppose that $x \succ y$ and $x_i < 1$.  Then for all $b_i\in A$, $(1,x_{-i}) \succeq (b_i,y_{-i})$.

\textbf{Prioritarian threshold}:  There exists $\beta\in (0,1)$ for which for all $x\in A^N$ for which there exists $i\in N$ such that $x_i < \beta$ and for all $j,k\in N\setminus\{i\}$, $x_j = x_k$, we have $\beta \succ x$.

Our first example is a class of binary relations satisfying all of our axioms except for separability, but which satisfies neither of the \cite{Mariotti_2021} axioms.  This class also satisfies homotheticity whenever homothetic expansions are meaningful.  We leave the verification of these claims to the reader.  

Roughly, this class is a kind of generalized maxmin class, whereby any two allocations which are comonotonic (that is, if agent $i$ is strictly richer than $j$ in one allocation, $i$ is weakly richer than $j$ in the other) are judged according to a weighted maxmin criterion.  The binary relation is then completed by symmetry.

\begin{example}Let $\alpha = (\alpha_1, \alpha_2, ..., \alpha_N)$ be a fixed increasing sequence of real numbers. For any $x \in [0, 1]^N$, let $d(x)$ denote the non-increasing rearrangement of $x$. Define a utility function $U:A^N\rightarrow \mathbb{R}$ via $U(x) = \min_{i\in N} \{ \alpha_i d(x)_i \}$.  Finally let $\succeq$ be the binary relation on $A^N$ represented by $U$.\end{example}

\begin{figure}[ht]
  \centering
  \begin{tikzpicture}

  % Axes
  \draw[->] (0,0) -- (8,0) node[right] {Agent 1};
  \draw[->] (0,0) -- (0,8) node[above] {Agent 2};

  % 45 degree line
  \draw[dashed] (0,0) -- (8,8);

  % 30 degree line
  \draw[dashed, domain=0:8] plot (\x, {\x*tan(30)});
  
  % 60 degree line
  \draw[dashed, domain=0:8/tan(60)] plot (\x, {\x*tan(60)});

  % Points and lines
  \foreach \radius in {4, 2, 6} {
    \pgfmathsetmacro{\xthirty}{\radius*cos(30)}
    \pgfmathsetmacro{\ythirty}{\radius*sin(30)}
    \pgfmathsetmacro{\xsixty}{\radius*cos(60)}
    \pgfmathsetmacro{\ysixty}{\radius*sin(60)}

    \draw[thick] (\xthirty, \ythirty) -- (\xthirty, \ysixty) -- (\xsixty, \ysixty);
    \draw[thick] (\xthirty, \ythirty) -- (8, \ythirty);
    \draw[thick] (\xsixty, \ysixty) -- (\xsixty, 8);

    \fill (\xthirty, \ythirty) circle[radius=2pt];
    \fill (\xsixty, \ysixty) circle[radius=2pt];
    \fill (\xthirty, \ysixty) circle[radius=3pt];
  }

\end{tikzpicture}
  \caption{Level sets for Example 1}
  \label{fig:example1}
\end{figure}

Our second example is a binary relation satisfying all the axioms of \cite{Mariotti_2021} except for separability, but which does not satisfy the sufficientarian judgment.  The example also satisfies symmetry.  The example is easy to generalize; roughly, it coincides with threshold sufficientarianism except when for both allocations, all agents reach the sufficientarian threshold.  In this case, it ranks things monotonically but in a way in which all allocations for which at least one agent consumes $1$ are indifferent. Again, we leave the verification of these claims to the reader.

\begin{example}First, fix $\tau \in (0,1)$. The binary relation $\succeq$ behaves as the sufficientarian rule with respect to $\tau$ for any $x, y \in A^N$ where $\min_{i\in N}\min\{x_i, y_i\} < \tau$.

If for all $i \in N$, $\min\{x_i, y_i\} \geq \tau$, then we define $x \succeq y$ if and only if $\prod_{i \in N} (1-x_i) \leq \prod_{i \in N} (1-y_i)$.\end{example}

\subsection{On threshold sufficientarianism}

In this section, we propose some simple axioms that allow us to meaningfully talk about threshold sufficientarianism in a multidimensional environment.  To this end, we assume that $A$ is endowed with a partial order $\geq$.\footnote{A partial order is complete, transitive, and antisymmetric.}  Further, we assume that for each pair $a,b\in A$, there is a unique greatest lower bound (or \emph{meet}) according to $\geq$; we denote this greatest lower bound by $a \wedge b$.  Formally, we assert that the pair $(A,\geq)$ is a meet-semilattice.  Definitions here are standard, see \emph{e.g.} \citet{davey_priestley_2002}.

Now, we can formulate a simple axiom that requires that different goods cannot substitute for each other in the sufficient set.  Obviously, whether this axiom makes any sense depends on what we imagine consumption goods to be.

To understand the property, let us consider the standard meet-semilattice of $\mathbb{R}_+^2$.  

Our goal is to require the following property, stated informally: If the loss of one good for an agent makes society strictly worse off, then we cannot revert to the original level of welfare by compensating the same agent with any other good.

To see how this works in the $\mathbb{R}_+^2$ case, consider $(a_1,a_2)$, $(b_1,a_2)$ and $(b_1,b_2)$, where $a_1 > b_1$ and $a_2 < b_2$.  Suppose now that society determines that changing $i$'s consumption from $(a_1,a_2)$ to $(b_1,a_2)$ leads to a strict decrease in social welfare.  That is, the loss of good $1$ for agent $i$ renders the society worse off.  Then we want to claim that there is no amount of good $2$ with which we could compensate agent $i$ that allows the social welfare to compensate fully for this loss:  $(a_1,a_2) \succ (b_1,b_2)$.

The axiom makes sense in cases where the goods are, say, food and clean water.  It would make less sense if the goods were broccoli and cauliflower.

In the lattice theoretic notation, the requirement is that if changing agent $i$'s consumption from $x$ to $x \wedge y$ leads to a strict social decrease, then moving from $x$ to $y$ must lead to a strict social decrease.  This motivates the following general concept, which slightly weakens the principle.

\textbf{$\geq$-Complements}:  Let $i\in N$, and let $a,b,c\in A$.  Suppose that $a_ic \succ (a\wedge b)_i c$.  Then $a_ic \succ b_ic$.  %Then $(a\wedge b)_ic\succeq b_ic$.

%The axiom is motivated by the idea that the ``goods'' in question cannot substitute for each other.  Thus, in the $\mathbb{R}^2_+$ case, insufficient water for an agent could not be substituted for by giving that agent more housing.  Thus, we think of the goods as being complementary, at least in terms of sufficiency.

With this in mind, let us now discuss the concepts of interest here.  One was already mentioned before:  say that a binary relation $\succeq$ is \emph{threshold sufficientarian} if there exists $\beta \in A$ for which $x\succeq y$ iff $|\{i:x_i \geq \beta\}|\geq |\{i:y_i \geq \beta\}|$.

Our next definition allows us to capture a modest generalization of threshold sufficientarianism.  The generalization permits technical concepts such as 'strict sufficientarianism' described in the preceding section, as well as more economically relevant examples, such as commodity spaces where some commodities are deemed irrelevant.  

Say that $\mathcal{F}$ is a \emph{$\geq$-filter} if the following two properties are satisfied.
\begin{enumerate}
\item If $x\in \mathcal{F}$ and $y \geq x$, then $y \in\mathcal{F}$
\item If $x,y\in \mathcal{F}$, then $(x\wedge y) \in \mathcal{F}$.
\end{enumerate}

We call a sufficientarian binary relation whose sufficient set is a filter a \emph{generalized threshold sufficientarian} binary relation.  We now discuss the concept.

For a threshold sufficientarian binary relation, the set $S$ takes the form $S=\{a\in A:a \geq \beta\}$.  It is easily verified that every such set is a filter.  Nevertheless, there are filters which are not threshold sufficientarian.  Below, we recall the two basic examples we have in mind.

\begin{enumerate}
\item Let $A = \mathbb{Z}^2$.  Then the set $S=\{(a_1,a_2):a_1 \geq 0\}$ is a filter.
\item Let $A = \mathbb{R}$.  Then the set $S = \{a:a>0\}$ is a filter.
\end{enumerate}

The first of these is a multi-commodity environment, where one of the commodities (the second) does not matter; it is redundant in terms of sufficiency.  We would like to allow for this generalization in case some of the commodities, whose consumption can come in negative units, are not deemed necessary for the agents.  This type of generalized threshold sufficientarian rule is an economically meaningful generalization of the concept of threshold sufficientarianism.

The second of these is close to being sufficientarian, but the sufficient set does not contain its lower bound.  We referred to such an object as a ``strict'' sufficientarian binary relation earlier.  This example constitutes a largely technical generalization of threshold sufficientarianism.

Generalized threshold sufficientarian binary relations can hold both aspects of these two examples across different commodities, but essentially nothing else.

Our goal is to characterize generalized threshold sufficientarian binary relations through the use of $\geq$-complements.  As a corollary, we will also provide conditions under which we can say a binary relation is threshold sufficientarian.  The following definitions are useful.

A filter corresponding to a threshold sufficientarian rule is called a \emph{principal filter} in the mathematics literature.  

First, a simple definition.  We say that $(A,\geq)$ satisfies the \emph{descending chain condition (DCC)} if for every sequence $a_1 \geq a_2 \geq a_3 \ldots$, there exists $k$ such that for all $l,m\geq k$, $a_l=a_m$.

The DCC is a weaker condition than finiteness.  For example, $A=\mathbb{Z}_+^n$ with its usual pointwise order satisfies the DCC.  The relevance of the condition is that it is necessary and sufficient for all filters to be principal filters.

Nevertheless, we may choose to work with semilattices that do not satisfy the DCC.  We can recover threshold sufficientarianism by working with a slightly stronger notion of a semilattice.  We say that $(A,\geq)$ is \emph{meet-complete} if every nonempty subset $B \subseteq A$ has a greatest lower bound, which we denote by $\bigwedge B$.  For example, $\mathbb{R}_+^n$ is meet-complete, whereas $\mathbb{R}^n$ is not.

The following axiom is meaningful for meet-complete semilattices only.

\textbf{Descending continuity}:  Let $(\Lambda,\geq')$ be a nonempty linearly ordered set, and let $\{x^{\lambda}\}_{\lambda\in\Lambda} \in A^N$ be a set of allocations for which for all $\lambda,\lambda' \in \Lambda$ and all $i\in N$, if $\lambda' \geq' \lambda$ then $x^{\lambda}_i \geq x^{\lambda'}_i$ and $x^{\lambda} \sim x^{\lambda'}$.  Then $(\bigwedge_\Lambda x^\lambda_1, \ldots, \bigwedge_\Lambda x^\lambda_n)\sim x^{\lambda'}$ for any $\lambda' \in \Lambda$.

The following is our main characterization of threshold sufficientarianism.

\begin{theorem}A binary relation $\succeq$ is generalized threshold sufficientarian iff it satisfies weak order, symmetry, separability, sufficientarian judgment, monotonicity, and $\geq$-complements.  \end{theorem}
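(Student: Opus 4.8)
The plan is to leverage the preceding corollary rather than rerun the argument of Theorem~\ref{thm:mainresult}. Since a generalized threshold sufficientarian relation is, by definition, a sufficientarian relation whose sufficient set is a filter, the axioms common to both statements—weak order, symmetry, separability, sufficientarian judgment, and monotonicity—already furnish, via that corollary, a monotone (upper comprehensive) sufficient set $S$ representing $\succeq$. Upper comprehensiveness is precisely the first filter axiom, so the only new content is to show that $\geq$-complements is exactly what forces $S$ to satisfy the second filter axiom, closure under meets.

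For the forward direction I would fix a monotone sufficient set $S$ supplied by the corollary and translate $\geq$-complements through the counting representation. The allocations $a_ic$ and $(a\wedge b)_ic$ differ only in coordinate $i$, and the background bundle $c$ contributes the same count $(n-1)\mathbf{1}_{\{c\in S\}}$ to both; hence $a_ic\succ(a\wedge b)_ic$ holds iff $a\in S$ and $a\wedge b\notin S$, and likewise $a_ic\succ b_ic$ holds iff $a\in S$ and $b\notin S$. Thus $\geq$-complements reads: whenever $a\in S$ and $a\wedge b\notin S$, we also have $b\notin S$. Its contrapositive states that $a\in S$ and $b\in S$ imply $a\wedge b\in S$, which is closure under meets. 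Together with upper comprehensiveness this makes $S$ a filter, so $\succeq$ is generalized threshold sufficientarian.

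For the converse I would begin with a filter $S$ and verify each axiom in turn. Weak order, symmetry, separability, and sufficientarian judgment are immediate from Theorem~\ref{thm:mainresult}, since $\succeq$ is sufficientarian. Monotonicity follows from upward closure: if $x_i\geq y_i$ for every $i$, then $\{i:y_i\in S\}\subseteq\{i:x_i\in S\}$, so $x\succeq y$. For $\geq$-complements, assume $a_ic\succ(a\wedge b)_ic$, so that $a\in S$ and $a\wedge b\notin S$; if $b$ were in $S$ then meet-closure would force $a\wedge b\in S$, a contradiction, so $b\notin S$, and with $a\in S$ this gives $a_ic\succ b_ic$.

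The proof is short precisely because Theorem~\ref{thm:mainresult} and its corollary carry the structural burden; the only genuine step is the dictionary between $\geq$-complements and meet-closure. The point deserving care—and the hinge of the whole argument—is verifying that the hypothesis and conclusion of $\geq$-complements depend only on the $S$-membership of $a$, $b$, and $a\wedge b$, independently of the fixed coordinate $i$ and background bundle $c$. This insensitivity is exactly what the sufficientarian counting representation guarantees, which is why the axiom collapses to a pure statement about the sufficient set.
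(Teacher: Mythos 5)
Your proposal is correct and follows essentially the same route as the paper: invoke Theorem~\ref{thm:mainresult} (via the monotone-sufficientarian corollary) to obtain the monotone sufficient set $S$, and then observe that under the counting representation the $\geq$-complements axiom is exactly closure of $S$ under meets (the paper phrases this as a proof by contradiction where you use the contrapositive, but the content is identical). Your explicit verification of the converse direction is slightly more detailed than the paper's one-line remark, but nothing differs in substance.
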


\begin{proof}We know by Theorem~\ref{thm:mainresult} that $\succeq$ is monotone-sufficientarian, let us call its corresponding sufficient set $S$.  Our goal is to establish that $S$ is a filter.  That it is closed upwards follows as $S$ is a monotone set.  Now, suppose that $a,b\in S$.  We want to claim that $(a \wedge b) \in S$, so suppose by means of contradiction that it is not.  

Then for any $c\in A$, we have $a_i c \succ (a\wedge b)_i c$, by the definition of sufficientarianism.  Therefore, by $\geq$-complements, $a_i c \succ b_i c$, which means that $b\notin S$, a contradiction.  So $S$ is a filter.

Finally, it is obvious that if $\succeq$ is generalized threshold sufficientarian, then it satisfies $\geq$-complements.  \end{proof}

The following proposition qualifies when we additionally can assert that the binary relation is threshold sufficientarian.

\begin{proposition}\label{prop:dcc}The following statements are true.
\begin{enumerate}
\item If $(A,\geq)$ satisfies the DCC, then a binary relation is threshold sufficientarian iff it is generalized threshold sufficientarian.
\item If $(A,\geq)$ is a meet-complete semilattice, then a binary relation is threshold sufficientarian iff it is generalized threshold sufficientarian and it satisfies descending continuity.
\end{enumerate}
\end{proposition}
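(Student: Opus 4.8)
The plan is to reduce both parts to a single question about the sufficient set $S$, which is a $\geq$-filter by the very definition of generalized threshold sufficientarianism: namely, when is $S$ \emph{principal}, i.e. of the form $\{a\in A: a\geq\beta\}$ for some $\beta\in A$. The elementary fact driving everything is that a filter is principal exactly when it has a minimum element. Indeed, if $\beta\in S$ is a lower bound for $S$, then upward-closedness gives $\{a:a\geq\beta\}\subseteq S$ and minimality gives the reverse inclusion, so $S=\{a:a\geq\beta\}$; conversely a principal filter plainly has minimum $\beta$. Thus $\succeq$ is threshold sufficientarian iff $S$ has a minimum. The reverse implications in both parts are the easy ones: a principal filter is a filter, so threshold sufficientarian implies generalized threshold sufficientarian. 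For Part~2 I would also verify descending continuity directly. Along a decreasing indifferent family, upward-closedness forces the set of sufficient coordinates to shrink as $\lambda$ increases, yet its cardinality is constant, so it is a fixed set $T$; for $i\in T$ we have $x^\lambda_i\geq\beta$ for every $\lambda$, hence $\bigwedge_\lambda x^\lambda_i\geq\beta$ lies in $S$, while for $i\notin T$ the smaller meet $\bigwedge_\lambda x^\lambda_i$ stays out of $S$ by upward-closedness, so the count is preserved.

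For the forward direction of Part~1, I would use that $S$ is downward directed, being closed under binary meets. The DCC guarantees that the nonempty set $S$ has a minimal element $\beta$; but in a downward-directed set every minimal element is a minimum, since for any $a\in S$ we have $\beta\wedge a\in S$ with $\beta\wedge a\leq\beta$, so minimality forces $\beta\wedge a=\beta$, i.e. $\beta\leq a$. Hence $S$ is principal and $\succeq$ is threshold sufficientarian. (One works with $S\neq\emptyset$; the degenerate totally-indifferent cases are excluded by non-degeneracy or by the convention that filters are proper.)

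The forward direction of Part~2 is the crux. Meet-completeness supplies the candidate threshold $\beta:=\bigwedge S$, and the whole content is to show $\beta\in S$. The obstacle is that descending continuity constrains only \emph{linearly ordered} families, whereas $S$ is merely directed, so the directed meet over all of $S$ is not directly accessible to the axiom. I would circumvent this with a transfinite chain of partial meets: well-order $S=\{s_\gamma:\gamma<\kappa\}$ and set $y^\alpha:=\bigwedge_{\gamma\leq\alpha}s_\gamma$, a decreasing transfinite sequence with $\bigwedge_{\alpha<\kappa}y^\alpha=\bigwedge S=\beta$. I claim $y^\alpha\in S$ for every $\alpha$, by transfinite induction: successor steps stay in $S$ because $S$ is closed under binary meets, and at a limit stage $\alpha$ I apply descending continuity to the allocations $x^\gamma:=(y^\gamma,s_0,\ldots,s_0)$ for $\gamma<\alpha$. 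These are coordinatewise decreasing and, since every coordinate lies in $S$, share the constant sufficiency count $n$, so they are mutually indifferent; descending continuity then renders the meet allocation $(y^\alpha,s_0,\ldots,s_0)$ indifferent to them, forcing its count to be $n$ and hence $y^\alpha\in S$. A final application of descending continuity to the full chain $\{(y^\alpha,s_0,\ldots,s_0)\}_{\alpha<\kappa}$ yields $(\beta,s_0,\ldots,s_0)$ of count $n$, so $\beta\in S$; thus $S$ is principal and $\succeq$ is threshold sufficientarian. The main difficulty, and the reason descending continuity is exactly the right hypothesis, is precisely this passage from the directed meet over all of $S$ to a linearly ordered transfinite meet to which the axiom applies.
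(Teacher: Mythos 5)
Your proposal is correct, and while Part~1 matches the paper, Part~2 takes a genuinely different route. For Part~1 the paper argues by contradiction: if $S$ is not principal then every $a\in S$ admits $b\in S$ with $a>a\wedge b\in S$, yielding an infinite strictly descending chain in violation of the DCC; you instead let the DCC produce a minimal element of the nonempty filter $S$ and use closure under binary meets to promote minimality to a minimum. These are the same mechanism in two phrasings. For Part~2 the paper applies descending continuity once per chain to conclude that every $\geq$-chain $Y\subseteq S$ has $\bigwedge Y\in S$, invokes Zorn's Lemma to obtain a $\geq$-minimal $\beta\in S$, and again uses the filter property ($a\wedge\beta\in S$ and $a\wedge\beta\leq\beta$) to turn minimality into a minimum; you instead well-order $S$, form the transfinite sequence of partial meets, and show by transfinite induction (binary meets at successors, descending continuity at limit stages) that every partial meet remains in $S$, so that $\bigwedge S\in S$ is itself the threshold. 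Your route is longer and needs one small bookkeeping repair: at a limit ordinal $\alpha$ the coordinatewise meet delivered by descending continuity is $\bigwedge_{\gamma<\alpha}y^\gamma=\bigwedge_{\gamma<\alpha}s_\gamma$, which must still be met with $s_\alpha$ to obtain $y^\alpha$ (harmless, by binary-meet closure). What it buys is an explicit identification of the threshold as $\bigwedge S$ and a transparent account of how the restriction of descending continuity to linearly ordered families is circumvented, whereas the paper's Zorn argument is shorter but only exhibits the threshold as an abstract minimal element; both use the axiom of choice in equivalent forms. Your spelled-out verification that threshold sufficientarianism implies descending continuity (the set of sufficient coordinates is constant along the chain, and meets preserve membership and non-membership in a principal $S$) is also more detailed than the paper's one-line assertion.
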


\begin{proof}Both of these arguments are standard, for the first, see for example \citet{davey_priestley_2002}.  Let us prove the first claim.  Suppose that $(A,\geq)$ has the DCC, and suppose that $\succeq$ is generalized threshold sufficientarian, with sufficient set $S$ that is a filter.  We claim that $S$ is a principal filter.  Suppose by means of contradiction that it is not.  Then there is no $\beta\in S$ for which $S = \{a\in A:a \geq \beta\}$.  Therefore, for any $a\in S$, there exists $b\in S$ for which $b \geq a$ is false.  As a consequence, we obtain that $a > (a\wedge b)$, and since $S$ is a filter, $(a\wedge b)\in S$.  Consequently, for every $a\in S$, there is $a'\in S$ for which $a > a'$.  Now let $a_1 \in S$ be arbitrary, and inductively construct $a_n\in S$ from $a_{n-1}\in S$ so that $a_{n-1} > a_n$.  This sequence establishes that $(A,\geq)$ violates the DCC.

Conversely, suppose that the DCC is violated for $(A,\geq)$, and let $a_1>a_2>\ldots$ be a strictly decreasing sequence.  Define $S=\bigcup_n \{a\in A:a\geq a_n\}$.  Observe $S$ is a filter but cannot be principal:  if $S$ were principal, there would exist $\beta\in S$ for which for all $a\in A$, $a\geq \beta$.  But since $\beta \in S$, by definition of $S$, there exists $n$ for which $\beta\in \{a\in A:a \geq a_n\}$.  Hence $\beta\geq a_n > a_{n+1}$, a contradiction.

%For the second part, it is obvious that every threshold sufficientarian binary relation satisfies descending continuity.  For the converse, define $\beta \equiv \bigwedge S$.  We claim that $\beta \in S$.  Suppose not and well-order $S$ according to some relation $\lhd$, where $b'\lhd b$ means $b'$ is smaller than $b$.  If $b,b'\in S$ where $b'\lhd b$, then $\bigwedge\{a\in S:b'\lhd a\}\leq \bigwedge\{a\in S:b\lhd a\}$.  Let $b^*\in S$ be the $\lhd$-smallest element $b\in S$ for which $\bigwedge\{a\in S:b \lhd a\}\notin S$.  By descending continuity, there must be such an element; otherwise, $\bigwedge\{a\in S:b\lhd a\}$ is itself decsending with respect to $\lhd $ in $b$, each element of which is a member of $S$, yet $\beta=\bigwedge_{b\in S}\bigwedge\{a\in S:b\lhd a\}\notin S$, in contradiction to descending continuity.

%We have that $b^* \in S$, and further that $\bigwedge_{\{b:b^*\lhd b,b\neq b^*\}}\{a\in S:b\lhd a\}\in S$, where the latter follows by descending continuity.  These two hypotheses together with the fact that $S$ is a filter establish that $\bigwedge\{a\in S:b^*\lhd a\}\in S$, a contradiction. 

For the second part, it is clear that any threshold sufficientarian ranking obeys descending continuity.  Next, we may without loss assume that $S\neq \varnothing$ (the case of $S = \varnothing$ delivers the same representation as the case of $S=A$).  Observe that descending continuity implies that any $\geq$-chain $Y\subseteq S$ is bounded below by $\bigwedge Y\in S$.  Consequently, Zorn's Lemma implies that there is a $\geq$-minimal element $\beta\in S$.  We claim that for all $a\in S$, $a \geq \beta$, for if not, then there is $a\in S$ for which $\beta > a \wedge \beta $, contradicting minimality of $\beta$ as $a \wedge \beta \in S$ because $S$ is a filter.%it is obvious that every threshold sufficientarian binary relation satisfies descending continuity.  For the converse, define $\beta \equiv \bigwedge S$.  We claim that $\beta \in S$.  Suppose not and well-order $S$ according to some relation $\lhd$, where $b'\lhd b$ means $b'$ is smaller than $b$.  If $b,b'\in S$ where $b'\lhd b$, then $\bigwedge\{a\in S:a\lhd b\}\leq \bigwedge\{a\in S:a\lhd b'\}$.  Let $b^*\in S$ be the $\lhd$-smallest element $b\in S$ for which $\bigwedge\{a\in S:a \lhd b\}\notin S$.  By descending continuity, there must be such an element; otherwise, $\bigwedge\{a\in S:a\lhd b\}$ is itself decsending with respect to $\lhd $ in $b$, each element of which is a member of $S$, yet $\beta=\bigwedge_{b\in S}\bigwedge\{a\in S:a\lhd b\}\notin S$, in contradiction to descending continuity.

%We have that $b^* \in S$, and further that $\bigwedge_{\{b:b\lhd b^*,b\neq b^*\}}\{a\in S:a\lhd b\}\in S$, where the latter follows by descending continuity.  These two hypotheses together with the fact that $S$ is a filter establish that $\bigwedge\{a\in S:a\lhd b^*\}\in S$, a contradiction.  
\end{proof}

%\end{proof}

\begin{remark}In terms of descending continuity, it is often the case that it is enough to state a weaker version which applies to sequences only.  That is, if $\{x^m\}$ is a decreasing sequence of allocations, for which $x^m \sim x^{m'}$ for all $m,m'\in\mathbb{N}$, then $\bigwedge_m x^m\sim x^{m'}$ for any $m'$.  This axiom would be sufficient, for example, for $\mathbb{R}_+^n$ with its usual order.  Mathematically, the restriction to arbitrary linearly ordered sets (we could replace the axiom by nets) is necessary, however.  For example, if $A$ is the set of all subsets of some uncountably infinite set, then a binary relation defined by a sufficient set $S$ consisting of all sets whose complements are at most countable would satisfy the sequential version but not the general version used in Proposition~\ref{prop:dcc}.\end{remark}

\section{On endogenous leximin and a weakening of sufficientarian judgment}\label{sec:leximin}

This section is devoted to studying the relation between sufficientarianisn and the leximin ranking.  The leximin ranking $\geq_l$ is a ranking on $\mathbb{R}^n$, which asserts that for any $x,y\in \mathbb{R}^n$, $x \geq_l y$ if, when each of $x$ and $y$ are rearranged in nondecreasing order, $x$ dominates $y$ according to the classical lexicographic relation.  Thus, it ranks nondecreasing rearrangements lexicographically. 

Our setup is abstract and thus the leximin ranking is not defined on $A^N$ absent some ordering on $A$.  Nevertheless, there are obvious parallels between sufficientarianism and the leximin ranking.  We detail these below.

First, let us say that a weak order $\succeq$ on $A^N$ is an \emph{endogenous leximin} ranking if the following is true.  There exists a weak order $\geq^*$ over $A$ for which, for any $x,y\in A^N$ the following two assertions are true:
\begin{enumerate}
\item If $x_n \geq^* x_{n-1} \ldots \geq^* x_1$ and $y_n \geq^* y_{n-1} \ldots\geq^* y_1$, then $x \succeq y$ if either for all $i\in N$, $x_i =^* y_i$, or there exists an $i\in N$ such that for all $j<i$, $x_j =^* y_j$ and $x_i >^* y_i$.
\item For any permutation $\sigma:N\rightarrow N$, $x \sim x\circ \sigma$.
\end{enumerate}

For a given weak order $\geq^*$ on $A$, the two properties uniquely pin down a weak order over $\succeq$:  for any pair $x,y\in A^N$, it is enough to assume by point (2) that $x$ and $y$ are ``nondecreasing,'' and then rank them according to (1).  Transitivity of $\succeq$ ensures that the ranking of the nondecreasing arrangements coincides with the ranking of $x$ and $y$.

If, in the definition of endogenous leximin, the ranking $\geq^*$ has a utility representation, the connection with the classical leximin ranking becomes clear.  Suppose that $\succeq$ is an endogenous leximin ranking, and $\geq^*$ has a utility representation $u$, so that $a \geq^* b$ iff $u(a) \geq u(b)$.  Then $x \succeq y$ iff $(u(x_1), \ldots u(x_n)) \geq_l (u(y_1),\ldots,u(y_n))$.  The phrase ``endogenous'' refers to the fact that the weak order $\geq^*$ over $A$ could be anything whatsoever, and need not be representable via a numerical scale.

We are unaware of any abstract studies of the endogenous leximin concept, though the idea has obviously existed in the litearture for some time.  \citet{sprumont} studies a special case of it, where the ranking $\geq^*$ is tied to individual preferences via a kind of ``agreement'' property.  More generally, the monograph by \citet{fleurbaey} describes many applications of the idea, where instead of a ranking $\geq^*$ on $A$, a ranking $\geq^*$ on $A\times N$ is implicitly considered, and usually based on classical utility constructions.  By contrast, individual preferences do not appear in our work.

The following proposition is immediate, so we state it without proof.  

\begin{proposition}\label{prop:endo}A ranking $\succeq$ is sufficientarian if and only if it is endogenous leximin with weak order $\geq^*$, where $\geq^*$ has at most two equivalence classes.\end{proposition}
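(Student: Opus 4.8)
The plan is to exploit the correspondence between sufficient sets and weak orders on $A$ with at most two equivalence classes, together with the fact recorded just before the proposition: if the underlying order $\geq^*$ of an endogenous leximin ranking admits a utility representation $u$, then $x\succeq y$ iff $(u(x_1),\ldots,u(x_n))\geq_l(u(y_1),\ldots,u(y_n))$. A weak order with at most two equivalence classes always admits such a representation by a $\{0,1\}$-valued function, so the whole argument reduces to understanding the leximin comparison of binary vectors.

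First I would record the key computational fact: for vectors with entries in $\{0,1\}$, the leximin relation coincides with comparison of coordinate sums, i.e. $(s_1,\ldots,s_n)\geq_l(t_1,\ldots,t_n)$ iff $\sum_{i}s_i\geq\sum_{i}t_i$. This is seen by writing each nondecreasing rearrangement as a block of $0$'s followed by a block of $1$'s: the vector with more $1$'s has the shorter leading block of $0$'s and so strictly dominates lexicographically, while two vectors with equally many $1$'s have identical rearrangements and are leximin-indifferent.

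For the forward direction, suppose $\succeq$ is sufficientarian with sufficient set $S$. I would define $\geq^*$ on $A$ by placing $S$ as the top class and $A\setminus S$ as the bottom class, and let $u$ be its representation with $u(a)=1$ if $a\in S$ and $u(a)=0$ otherwise. This $\geq^*$ has at most two equivalence classes, and the endogenous leximin ranking it induces satisfies $x\succeq y$ iff $\sum_i u(x_i)\geq\sum_i u(y_i)$ by the two facts above; since $\sum_i u(x_i)=|\{i:x_i\in S\}|$, this is precisely the sufficientarian ranking. For the converse, suppose $\succeq$ is endogenous leximin with $\geq^*$ having at most two classes; I would take $S$ to be the top class (or $S=A$ if there is a single class), let $u$ be the same $\{0,1\}$ representation, and run the identical chain of equivalences to conclude $x\succeq y$ iff $|\{i:x_i\in S\}|\geq|\{i:y_i\in S\}|$, so that $\succeq$ is sufficientarian.

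The only point requiring care, and the reason the statement is ``immediate'' rather than fully trivial, is confirming that a two-class order genuinely yields a $\{0,1\}$ utility and that the leximin-to-sum reduction covers the degenerate single-class case, where $u$ is constant and both relations collapse to total indifference (matching the sufficientarian relation with $S=A$ or $S=\varnothing$). I expect no real obstacle beyond cleanly invoking the utility-representation characterization of endogenous leximin already established above.
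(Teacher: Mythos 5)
Your argument is correct. The paper states this proposition without proof (calling it ``immediate''), and your reduction --- that leximin comparison of $\{0,1\}$-valued vectors coincides with comparison of coordinate sums, applied to the indicator of the sufficient set / top equivalence class in both directions --- is exactly the intended observation, which the paper itself hints at in the following paragraph when it notes that there is only one monotone and symmetric method of ranking vectors taking two values.
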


Proposition~\ref{prop:endo} by itself does not say much about sufficientarian rankings.  Any anonymous Bergson-Samuelson social welfare function applied to a ranking with two equivalence classes would give the same result.  This is because there is only one monotone and symmetric method of ranking vectors of two-values.

Nevertheless, here we show that an obvious weakening of sufficientarian judgment is closely related to endogenous leximin.  Instead of requiring that a loss in wealth for one agent cannot be compensated for partially or fully by a change in some other agent's consumption, we require instead only that it cannot be fully compensated.

\textbf{Weak sufficientarian judgment:} Let $a,b,c\in A$ and suppose that $i\neq j$.  Then $b \succ a_i b$ implies $b \succ c_j (a_i b)$.  

The following result establishes that endogenous leximin emerges upon weakening sufficientarian judgment to weak sufficientarian judgment.  It is not a full characterization, nevertheless, it demonstrates a connection of leximin to sufficientarianism.

\begin{proposition}Any endogenous leximin binary relation $\succeq$ satisfies weak order, symmetry, separability, and weak sufficientarian judgment.  If $|N|=2$, there are no other binary relations satisfying the four properties.\end{proposition}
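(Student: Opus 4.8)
The plan is to prove the two directions separately. For the forward direction, I would verify that each endogenous leximin relation satisfies the four properties. Weak order and symmetry are immediate from the definition (symmetry is literally condition (2), and weak order follows from the remark that $\geq^*$ uniquely pins down a transitive, complete $\succeq$). For separability, I would argue that changing the coordinates in a fixed set $M$ to commonly-valued entries shifts the comparison uniformly: since $\succeq$ compares nondecreasing rearrangements lexicographically with respect to $\geq^*$, altering the $-M$ coordinates identically in two allocations does not change which of the $M$-coordinates is pivotal, so the ranking is preserved. This is the kind of routine rearrangement bookkeeping I would state carefully but not belabor.

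The genuinely substantive part of the forward direction is weak sufficientarian judgment. Here I would suppose $b \succ a_i b$, which in the endogenous leximin relation forces $b >^* a$ (since $b \succ a_i b$ means the allocation with one entry lowered to $a$ is strictly worse, and by the lexicographic-on-rearrangements structure this happens exactly when $a$ is strictly $\geq^*$-below $b$). Now I must show $b \succ c_j(a_i b)$ for every $c$. The allocation $c_j(a_i b)$ has entries: $a$ in position $i$, $c$ in position $j$, and $b$ elsewhere. Its smallest entry (in $\geq^*$) is $\min\{a, c, b\} = \min^*\{a,c\}$ since $a <^* b$. Compared to $b$ (all entries equal to $b$), the rearrangement of $c_j(a_i b)$ has a strictly $\geq^*$-smaller minimal coordinate, namely something $\leq^* a <^* b$. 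So at the first point of lexicographic comparison, $b$ strictly dominates, giving $b \succ c_j(a_i b)$. The key insight I would emphasize is that lowering one coordinate strictly below $b$ creates a permanent deficit at the bottom of the leximin order that no single other-coordinate change can repair, which is precisely the content of weak (but not full) sufficientarian judgment.

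For the converse with $|N|=2$, I must show that the four axioms force the relation to be endogenous leximin. As in Theorem~\ref{thm:mainresult}, I would define $\geq^*$ on $A$ by $a \geq^* b$ iff $a_i c \succeq b_i c$ (well-defined by separability), and check it is a weak order. Given $\geq^*$, I must then verify that $\succeq$ ranks two-agent allocations exactly as endogenous leximin prescribes: reduce any $(a,b)$ by symmetry to its nondecreasing form, and show that between two such nondecreasing pairs $(x_1,x_2)$ and $(y_1,y_2)$ with $x_2 \geq^* x_1$, $y_2 \geq^* y_1$, the relation first compares the smaller coordinates and, only on a tie there, compares the larger ones. The main obstacle will be establishing this lexicographic priority of the smaller coordinate --- that a strictly larger minimum always wins regardless of the maxima. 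This is exactly where weak sufficientarian judgment does the work: if $x_1 >^* y_1$, I would write down a chain showing $(x_1,x_2) \succ (y_1, \text{anything})$, using weak sufficientarian judgment to block any compensation through the second coordinate, with separability to reduce to the canonical forms $a_i b$ and $c_j(a_i b)$. I expect that most effort will go into handling the boundary cases where coordinates coincide under $\geq^*$ and into confirming that separability plus weak sufficientarian judgment together pin down the tie-breaking on the second coordinate, rather than into the generic strict case.
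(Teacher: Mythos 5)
Your proposal is correct and follows essentially the same route as the paper: the forward direction is a routine verification (your explicit leximin argument for weak sufficientarian judgment is a fine expansion of what the paper leaves to the reader), and for the converse with $|N|=2$ you define $\geq^*$ via separability exactly as the paper does and isolate the same key case, namely comparing allocations whose minima differ, resolved by the same chain that uses separability to reach a constant allocation and then weak sufficientarian judgment to block compensation through the other coordinate. No substantive differences to report.
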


\begin{proof}Verification of the four properties for any endogenous leximin binary relation is routine.  Separability is the only nontrivial property to verify, but this can easily be established using the fact that the leximin ranking is itself separable; see \emph{e.g.} \citet{daspremont,deschamps,barbera1988}.

For the converse when $|N|=2$, define $\geq^*$ in the obvious way, namely $a \geq^* b$ if for any $c$ and any $i\in N$, $a_ic \succeq b_ic$.  Separability ensures $\geq^*$ is well-defined while weak order ensures that $\geq^*$ is also a weak order.  

Now, let $x,y\in A^N$.  By symmetry and transitivity, suppose without loss that $x_1 \geq^* x_2$ and $y_1 \geq^* y_2$. 

We need to ensure that 
\begin{enumerate}
\item if $x_2 >^* y_2$, then $x \succ y$ 
\item if $x_2 \sim^* y_2$ and $x_1 >^* y_1$, then $x \succ y$.
\item if $x_2 \sim^* y_2$ and $x_1 \sim^* y_1$, then $x \sim y$.
\end{enumerate}

The third case is obvious by separability and transitivity.  In terms of the second case, we have that $x_1 >^* y_1$ and $x_2 \geq^* y_2$, so again by separability and weak order we must have $x \succ y$.

For the first case, if $x_1 \geq^* y_1$, then the argument is identical to the second case and we are done.  So finally suppose that $x_2 >^* y_2$ and $y_1 >^* x_1$.  Observe that $x =(x_1,x_2)\succeq (x_2,x_2)$, by separability and $x_1 \geq^* x_2$.  Again by separability, $(x_2,x_2) \succ (x_2,y_2)$.  Now by weak sufficientarian judgment, $(x_2,x_2) \succ (y_1,y_2)=y$.  Consequently by transitivity, we we establish that $x\succ y$.  \end{proof}

%Suppose that $x_1 \geq^* y_1$ and $x_2 \geq^* y_2$.  Then separability and transitivity immediately imply $x \succeq y$, consistent with endogenous leximin based on $\geq^*$.  Similarly, if either of the rankings is strict, say $x_1 >^* y_1$ and $x_2 \geq^* y_2$, then $ x\succ y$ follows from weak order and separability.

%Suppose now, and without loss again by symmetry and transitivity, that $x_1 >^* y_1$ and that $y_2 >^* x_2$, so that $x_1 >^* y_1\geq^* y_2 >^* x_2$.  We claim that $y \succ x$.  To see this, observe that $y =(y_1,y_2)\succeq (y_2,y_2)$, by separability.  Again by separability, $(y_2,y_2) \succ (y_2,x_2)$.  Now by weak sufficientarian judgment, $(y_2,y_2) \succ (x_1,x_2)=x$.  Consequently by transitivity, we we establish that $y\succ x$.  \end{proof}

Finally, we show that the four axioms are not enough to characterize endogenous leximin when $|A|\geq 3$.  We conjecture that straightforward strengthenings of weak sufficientarian judgment or of the other axioms should suffice for such a characterization, but we leave this to future research.

\begin{example}
Let $N=\{1,2,3\}$, and $A = \{a,b,c\}$.
Define a binary relation $\succeq$ on $A^N$ by
$(c,c,c) \succ (b,c,c) \succ (b,b,c) \succ (a,c,c) \succ (b,b,b) \succ (a,b,c) \succ (a,b,b) \succ (a,a,b) \succ (a,a,a)$, where symmetry and transitivity characterize the remaining parts of the binary relation.  It is straightforward to verify that all of the axioms are satisfied here.

It is also easy to see that, were $\succeq$ endogenous leximin, then $c >^* b>^* a$ would be necessary, as $(c,c,c) \succ (b,c,c)$ and $(b,b,b) \succ (a,b,b)$.  Yet endogenous leximin would then require that $(b,b,b) \succ (a,c,c)$, which is false.  \end{example}

\section{Conclusion}\label{sec:conclusion}

In this work, we have studied the concept of sufficientarianism applied to an abstract framework and also to more structured economic environments.  Sufficientarianism was shown to be the unique binary relation satisfying a collection of axioms, namely, symmetry, separability, and sufficientarian judgment.

%Sufficientarianism is to be understood as a ``pre-ranking,'' imposed to rule out profiles that should never be chosen in the presence of other feasible profiles.

There are several possible extensions for our work.  As a first point, our work is fixed population, and might amount to 'head-counting.'  A work that allows one to consider multidimensional allocations across different populations would be similarly interesting; see, for example, \citet{Bossert2021, Bossert2022}.  

Second, sufficientarianism lends itself to a natural cardinal ranking, in which we only count the number of people who reach a given threshold.  One could consider imposing a random threshold and studying the expected number of people to reach the threshold.  In the case of $X=\mathbb{R}_+$, we could clearly end up with the set of all (linear) rankings satisfying a weak form of monotonicity in doing so.  But in higher dimensional environments, we simply have not studied what this approach would generate.  It is conceivable that the sufficientarian rankings generate as their expectation some type of supermodular and increasing rankings.  We leave this to future research.

Threshold sufficientarianism is meaningful for any meet-semilattice, though we have only provided a characterization in semilattices that are meet-complete. Characterizing it more generally may be interesting. Of course, weakening axioms, especially that of completeness of the social ranking, might prove interesting as well.

%Finally, we have motivated our concept of sufficientarianism as a kind of ``pre-ranking,'' to be imposed prior to other, more finely-grained rankings.  We do not know what happens when taking this literally, and imposing a Paretian ranking ``on top'' of a sufficientarian ranking.  For example, given a sufficientarian ranking $\succeq$ and a Paretian ranking $\succeq^*$, this approach suggests ruling out any $(N,x)$ for which either there exists a feasible $y$ for which $(N,y)\succ (N,x)$ or for which there exists a feasible $y$ for which $(N,y) \sim (N,x)$ and $(N,y)\succ^* (N,x)$.  Again, we leave this to future research.

%Summary
%Limitations
%Implications for Future Research

\newpage

\bibliography{Bibliography}
%\cite{Roger2003}
%\cite{Bossert2022}
%\cite{Mariotti_2021}
\end{document}